\documentclass[letterpaper, 10pt, conference]{ieeeconf}  

\IEEEoverridecommandlockouts                             

\usepackage{tikz}
\usetikzlibrary{arrows.meta,positioning}
\usepackage{tabularx}
\usepackage{siunitx}
\usepackage{xcolor}

\usepackage{mymath}
\newtheorem{proposition}{Proposition}
\newtheorem{lemma}{Lemma}
\newtheorem{theorem}{Theorem}
\newtheorem{definition}{Definition}
\usepackage{mathrsfs}
\usepackage{parskip}
\usepackage{algorithm}
\usepackage{algpseudocode}
\usepackage{balance}
\usepackage{subcaption}
\usepackage{graphicx}
\usepackage{pgfplots}
\usepackage{siunitx}
\pgfplotsset{compat=1.18}
\makeatletter
\renewcommand*\env@matrix[1][\arraystretch]{%
  \edef\arraystretch{#1}%
  \hskip -\arraycolsep
  \let\@ifnextchar\new@ifnextchar
  \array{*\c@MaxMatrixCols c}}
\makeatother
\usepackage{aligned-overset}
\usepackage{nicematrix}
\usepackage{booktabs}
\usepackage{multirow}
\usepackage{tikz}
\usepackage{graphicx}
\usepackage{varwidth, tikz, nth}
\usepackage{pgfplots}
\usepackage{pgfplotstable}
\usepgfplotslibrary{fillbetween}
\usepgfplotslibrary{statistics}
\pgfplotsset{compat=1.18}
\usepgfplotslibrary{patchplots,colormaps,colorbrewer,fillbetween}
\usetikzlibrary{shapes, arrows, shapes.misc, arrows.meta, positioning, matrix, calc, fit, fadings, patterns, plotmarks, shapes.geometric, pgfplots.colorbrewer, decorations.markings, decorations.pathmorphing, backgrounds, intersections, decorations.text, decorations.markings}
\usepackage{tikzviolinplots}
\definecolor{legreddraw}{HTML}{E41A1C}
\definecolor{legbluedraw}{HTML}{377EB8}
\definecolor{leggreendraw}{HTML}{4DAF4A}
\definecolor{legvioletdraw}{HTML}{984EA3}
\definecolor{legredfill}{HTML}{B61516}
\definecolor{legbluefill}{HTML}{2C6593}
\definecolor{leggreenfill}{HTML}{3E8C3B}
\definecolor{legvioletfill}{HTML}{7A3E82}

\makeatletter
\def\@citex[#1]#2{\leavevmode
\let\@citea\@empty
\@cite{\@for\@citeb:=#2\do
{\@citea\def\@citea{,\penalty\@m\ }%
\edef\@citeb{\expandafter\@firstofone\@citeb\@empty}%
\if@filesw\immediate\write\@auxout{\string\citation{\@citeb}}\fi
\@ifundefined{b@\@citeb}{\hbox{\reset@font\bfseries ?}%
\G@refundefinedtrue
\@latex@warning
{Citation `\@citeb' on page \thepage \space undefined}}%
{\@cite@ofmt{\csname b@\@citeb\endcsname}}}}{#1}}
\makeatother

\newtheorem{corollary}{Corollary}
\newtheorem{remark}{Remark}

\newcommand{\red}[1]{\textcolor{black}{#1}}

\title{\bf \vspace{0.8em}
Data-Based Clustering and Control of Similar Biological Systems
}

\author{ Peilin Zhang,  Antonis Papachristodoulou, Idris Kempf
\thanks{This work was supported by the Engineering and Physical Sciences Research Council (EPSRC) under the EEBio Programme Grant, EP/Y014073/1. For the purpose of Open Access, the author has applied a CC BY public copyright licence to any Author Accepted Manuscript (AAM) version arising from this submission. }
\thanks{All Authors are with the University of Oxford, Parks Road, Oxford OX1 3PJ. Emails: {\tt \small peilin.zhang@keble.ox.ac.uk}, {\tt \small antonis@eng.ox.ac.uk}, {\tt\small idris.kempf@eng.ox.ac.uk}.}%
}

\begin{document}
\bstctlcite{BSTcontrol}

\maketitle
\thispagestyle{empty}
\pagestyle{empty}

\begin{abstract}
Cybergenetic control of gene expression enables applications in synthetic biology, drug development, and biomanufacturing. Microfluidic platforms allow the parallel control of large cell populations. However, the resulting computational burden and intrinsic biological heterogeneity limit the scalability of conventional control strategies. In this work, we propose a similarity-based framework to reduce the computational requirement of controlling large numbers of dynamical systems. Building on existing data-driven methods for quantifying control-relevant similarity from input–output data, we cluster systems with similar dynamics without requiring explicit system identification. Based on this grouping, we develop a hierarchical leader–follower control architecture, where a single controller is designed for each cluster and applied to all members. This significantly reduces the number of control problems that need to be solved. Furthermore, we analyse the closed-loop behaviour within clusters and develop data-driven conditions under which the clustered closed-loop systems remain well-posed. The proposed approach is demonstrated in simulations of gene expression dynamics, showing that similarity-based grouping enables scalable and reliable control of heterogeneous biological systems.
\end{abstract}


\section{INTRODUCTION}

Cybergenetic technologies that apply the principles of control theory and communication (cybernetics) to genetics and living biological systems have enabled controlled regulation of gene expression, growth, and metabolic activity in living cells. Applications include personalised therapy, scalable bioproduction, and the design of robust synthetic circuits~\cite{KHAMMASHREVIEW}. However, biological systems exhibit significant cell-to-cell variability even between genetically identical cells, challenging traditional model-based control. Instead, researchers increasingly seek to control many biological systems individually and in parallel. Controlled variables typically include gene expression levels, measured via fluorescent reporters, and growth rate or metabolite production, actuated through optogenetic, chemical, or mechanical inputs. 

Recent advances in microfluidic platforms and control algorithms have enabled the precise regulation of individual cells in parallel. While these methods rely on either model-based controllers~\cite{automated_optogenetic_control_milias_argeitis_206,shaping_bacterial_populationChait2017} or AI-based methods~\cite{deep_mpc_thousands_of_single_cells_Lugagne2024,10178155}, which are labour-intensive to train and must be re-trained for each application, data-driven approaches such as data-enabled predictive control (DeePC)~\cite{DEEPC} provide a framework that avoids explicit system identification or training. In our recent work~\cite{perreault2026}, we developed a data-based predictive controller for a two-input two-output genetic circuit. By accommodating input nonlinearities via variable transformations, the method outperformed both model-driven and data-driven baselines in terms of performance and sample efficiency.

Despite advances, scalability remains fundamentally limited because of computational requirements. Existing approaches typically compute control inputs independently for each cell, resulting in a computational cost that grows with the population size. In large-scale settings, such as in microfluidic platforms, the number of controlled systems can become a limiting factor, as thousands of optimisation or control problems must be solved in parallel in real time. To enable scalable control of many biological systems, a natural direction is to reuse or combine information across systems rather than optimise for each one individually. 

Clustering and grouping strategies have been widely studied for dynamical systems. In order to tackle the scarcity of physical data, existing approaches group systems based on similarity of observed trajectories and learned representations and have been applied to enable shared data in cases such as transfer learning~\cite{wang2024learningcontrolsimilarityheterogeneous}~\cite{Li_2024}, or federated learning~\cite{vankan2025federateddeepcborrowingdata}. Similarity-based clustering has also found applications in the control of nonlinear systems~\cite{HAN2024103252}. However, these approaches are either not used for clustered control directly, or rely on similarity measures that are not explicitly tied to control objectives and closed-loop behaviour. Clustered control was proposed in~\cite{MUNSER20204623}~\cite{labella2020supervisedmpccontrollargescale} for example, but it relied on the knowledge of explicit system models. However, model-based clustering is impractical in large-scale settings, where reliable model identification is difficult or infeasible. This limitation motivates the need for data-driven clustering methods that can guide grouping decisions for direct sharing of controllers while ensuring some closed-loop guarantees.

In this work, we address the problem of scalable control for large clusters of heterogeneous dynamical systems by leveraging control-relevant and data-driven similarity, rather than heuristic similarity measures. Building on existing data-driven formulations of the gap metric~\cite{padoan2022}, we exploit it for similarity-based clustering. We propose a hierarchical leader--follower control architecture, in which computationally intensive operations are performed only by the cluster leaders, thereby significantly reducing the computational requirements. Furthermore, we establish well-posedness and derive a closed-loop performance bound for the resulting similarity-aware control framework. Finally, we combine similarity-aware control framework with data-based linear quadratic control to build a pipeline and apply it to control biological systems in simulation.

The paper is organised as follows. We introduce the background theory in Section~\ref{sec:background}. Then we develop theory on well-posedness \red{in Section~\ref{sec:theory}} in order to theoretically validate the method. In Section~\ref{sec:sim}, simulation experiments are conducted and results are shown.


\paragraph*{Notation and Definitions}

Let $\mathbb{Z}_{>0}$ denote the set of strictly positive integers and $\R$ the set of real numbers. The Euclidean 2-norm and induced operator norm are denoted by $\|\cdot\|$, and the Kronecker product by $\otimes$. For a linear operator $H:\mathcal{X}\rightarrow\mathcal{Y}$, its image and kernel are denoted by $\mathrm{im}(H)\eqdef\{Hx:x\in\mathcal{X}\}$ and $\ker(H)\eqdef\{x\in\mathcal{X}:Hx=0\}$, respectively, and its adjoint by $H^*:\mathcal{Y}\rightarrow\mathcal{X}$. For a closed subspace $V\subseteq\mathcal H$ of a Hilbert space $\mathcal H$, $P_V:\mathcal H\rightarrow V$ denotes the orthogonal projector onto $V$, defined by $P_Vx\eqdef\arg\min_{v\in V}\|x-v\|$, and let $P_V^\perp\eqdef I-P_V=P_{V^\perp}$. If $Q_V$ has orthonormal columns spanning $V$, then $P_V=Q_VQ_V^\Tr$~\cite{matrix_computation}. For two subspaces $V_1,V_2\subseteq V$, we write $V=V_1\oplus V_2$ if $V=V_1+V_2$ and $V_1\cap V_2=\{0\}$. For a signal $x=\col(x(1),\ldots,x(T))$, $H_L(x)$ denotes the block Hankel matrix of depth $L$,
\begin{equation}\label{eq:hankel}
H_L(x)=
{\setlength{\arraycolsep}{3pt}
\begin{pmatrix}
x(1) & x(2) & \dots & x(T\!-\!L\!+\!1) \\[-.4em]
\vdots & \vdots & \ddots & \vdots \\
x(L) & x(L\!+\!1) & \dots & x(T)
\end{pmatrix}}.
\end{equation}
An $m$-dimensional signal $x$ is persistently exciting of order $L$ if $\rank H_L(x)=mL$.

\section{Background and Preliminaries}\label{sec:background}

The data-driven representation used in this paper relies on Willems'
Fundamental Lemma~\cite{Willems2004PE} and is therefore restricted to
linear time-invariant (LTI) systems. We use complementary operator and
behavioural descriptions of such systems. Let $\mathcal U$ and
$\mathcal Y$ be Hilbert spaces of input and output signals, and let the
plant and controller be linear operators
$G:D(G)\subseteq\mathcal U\rightarrow\mathcal Y$ and
$C:D(C)\subseteq\mathcal Y\rightarrow\mathcal U$, respectively, where
$D(\cdot)$ denotes the domain of an operator. For a finite horizon $L$,
let $\mathcal U_L=\R^{mL}$, $\mathcal Y_L=\R^{pL}$, and
$\mathcal H_L=\mathcal U_L\oplus\mathcal Y_L =\R^{(m+p)L}$, and denote the
corresponding finite-horizon operators by
$G_L:\mathcal U_L\rightarrow\mathcal Y_L$ and
$C_L:\mathcal Y_L\rightarrow\mathcal U_L$. Their graph and inverse
graph are
$\mathcal G_L(G)\eqdef\{\mathrm{col}(u,G_Lu):u\in\mathcal U_L\}$ and
$\mathcal G_L^{-1}(C)\eqdef
\{\mathrm{col}(C_Ly,y):y\in\mathcal Y_L\}$, respectively.
The subscript $L$ is omitted whenever the horizon is clear from context.

In the behavioural description, for an LTI realisation $(A_G,B_G,C_G,D_G)$ of order $n$, let $\mathcal B_G\eqdef\{(u,y)\in\mathcal U\times\mathcal Y:\exists\,x\in(\R^n)^{\mathbb Z_{\geq0}}\ \mathrm{s.t.}\ x_{t+1}=A_Gx_t+B_Gu_t,\ y_t=C_Gx_t+D_Gu_t,\ \forall t\in\mathbb Z_{\geq0}\}$ denote the set of all admissible input--output trajectories of $G$.

\begin{definition}[Restricted Behaviour~\cite{padoan2022}]
For $L\in\mathbb Z_{>0}$, the restricted behaviour
$\mathcal B_G|_L$ is the set of all length-$L$ restrictions of
trajectories in $\mathcal B_G$. For an LTI realisation of order $n$,
$\mathcal B_G|_L\eqdef\{\mathrm{col}(u,G_Lu+O_Lx_0):
u\in\R^{mL},\,x_0\in\R^n\}$, where $O_L$ is the finite-horizon
observability matrix.
\end{definition}

The restricted behaviour $\mathcal B_G|_L\subseteq\mathcal H_L$ provides a finite-dimensional representation of the system behaviour. The subset of $\mathcal B_G|_L$ corresponding to $x_0=0$ is the finite-horizon graph $\mathcal G_L(G)$. Under the conditions of Willems' Fundamental Lemma, $\mathcal B_G|_L$ can be obtained directly from a sufficiently informative input--output trajectory~\cite{Willems2004PE,padoan2022}. In particular, for a controllable LTI system of order $n$, if the input $u$ is persistently exciting of order $L+n$, then \(\mathcal{B}_G|_L=\mathrm{im}\, H_L^{\mathrm{io}} \eqdef \mathrm{im}{\renewcommand{\arraystretch}{0.8} \begin{bmatrix}H_L(u)\\ H_L(y)\end{bmatrix}}.\)We next consider well-posedness of the feedback interconnection shown in Fig.~\ref{fig:system}.

\begin{definition}[Well-posedness~\cite{Feintuch1998}]
\label{def:well-pose}
For the feedback interconnection in Fig.~\ref{fig:system}, define
$S\eqdef{\setlength{\arraycolsep}{2pt}\begin{bmatrix}I&C\\[-.2em]G&-I\end{bmatrix}}$. The interconnection $[G,C]$ is said to be \emph{well-posed} if $S:D(G)\oplus D(C)\rightarrow\mathcal U\oplus\mathcal Y$ is bijective.
\end{definition}

\begin{figure}[t]
\centering
\begin{tikzpicture}[
    >=Latex,
    block/.style={draw, thick, minimum width=1.25cm, minimum height=0.6cm},
    sum/.style={draw, circle, thick, minimum size=3mm, inner sep=0pt},
    every node/.style={font=\small}
]
\node[sum] (s1) at (0,0) {};
\node[block, right=1.5cm of s1] (G) {$G$};
\node[block, below=0.5cm of G] (C) {$C$};
\node[sum, right=1.5cm of C] (s2) {};
\node at ($(s1)+(0.22,-0.22)$) {$-$};
\node at ($(s1)+(-0.22,0.22)$) {$+$};
\node at ($(s2)+(-0.22,0.22)$) {$+$};
\node at ($(s2)+(0.22,-0.22)$) {$-$};
\draw[->, thick] ([xshift=-1cm]s1.west) -- (s1.west) node[midway, above] {$d$};
\draw[->, thick] (s1) -- (G) node[midway, above] {$u$};
\draw[->, thick] (G) -- (G.center -| s2.center) node[midway, above] {$z$} -- (s2.north);
\draw[->, thick] (s2.west) -- (C.east) node[midway, above] {$y$};
\draw[->, thick] (C.west) -- (C.center -| s1.center) node[midway, above] {$v$} -- (s1.south);
\draw[->, thick] ([xshift=1cm]s2.east) -- (s2.east) node[midway, above] {$r$};
\end{tikzpicture}
\caption{Standard feedback configuration.}\label{fig:system}
\end{figure}

Hence, every admissible pair of exogenous signals determines a unique
pair of internal signals. On a finite horizon,
Def.~\ref{def:well-pose} is applied directly to $G_L$ and $C_L$,
i.e., by requiring $S_L\eqdef{\setlength{\arraycolsep}{2pt}\begin{bmatrix}I&C_L\\[-.1em]G_L&-I\end{bmatrix}}$ to be bijective.

To compare systems directly from data, we use the $L$-gap, a
finite-horizon counterpart of the gap metric~\cite{Feintuch1998,padoan2022}.

\begin{definition}[$L$-gap~\cite{padoan2022}]\label{def:gap}
Let $\mathcal B_1|_L$ and $\mathcal B_2|_L$ be two restricted
behaviours, with orthogonal projectors $P_1$ and $P_2$. Their $L$-gap
is defined as
$\mathrm{gap}_L(\mathcal B_1|_L,\mathcal B_2|_L)
\eqdef\|P_1-P_2\|
=\max\{\|P_2^\perp P_1\|,\|P_1^\perp P_2\|\}$. If $\dim B_1|_L\equiv\dim B_2|_L$, $\|P_2^\perp P_1\|=\|P_1^\perp P_2\|$. 
\end{definition}

For restricted behaviours of equal dimension, the $L$-gap is
equivalently the sine of their largest principal angle and can
therefore be computed directly from bases of the corresponding
subspaces~\cite{padoan2022}. This provides the similarity measure used
for clustering in this paper.

\section{Main Results} \label{sec:theory}

\subsection{Clustering}

We first derive finite-horizon conditions under which a controller designed for one system can be applied to another while preserving well-posedness and bounding performance degradation.

\begin{lemma}[Injectivity]\label{thm:injectivity}
The finite-horizon closed-loop operator $S_L$ is injective iff $\mathcal G_L(G)\cap\mathcal G_L^{-1}(-C)=\{0\}.$
\end{lemma}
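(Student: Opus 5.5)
Since $S_L$ is linear on the finite-dimensional space $\mathcal U_L\oplus\mathcal Y_L$, it is injective exactly when $\ker S_L=\{0\}$. The plan is to compute $\ker S_L$ explicitly and show that it is in linear bijection with $\mathcal G_L(G)\cap\mathcal G_L^{-1}(-C)$. Both sides are then trivial simultaneously.

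First, let $\mathrm{col}(u,y)\in\ker S_L$. By the block definition of $S_L$ this means
\begin{equation*}
u+C_Ly=0,\qquad G_Lu-y=0,
\end{equation*}
that is, $y=G_Lu$ and $u=-C_Ly=(-C)_Ly$. The first relation gives $\mathrm{col}(u,y)=\mathrm{col}(u,G_Lu)\in\mathcal G_L(G)$. The second gives $\mathrm{col}(u,y)=\mathrm{col}((-C)_Ly,y)\in\mathcal G_L^{-1}(-C)$. Hence $\ker S_L\subseteq\mathcal G_L(G)\cap\mathcal G_L^{-1}(-C)$, where both sets are viewed as subsets of $\mathcal H_L=\mathcal U_L\oplus\mathcal Y_L$ with the same coordinate ordering $\mathrm{col}(u,y)$.

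Conversely, let $w$ lie in the intersection. Then $w=\mathrm{col}(u,G_Lu)$ for some $u$, and also $w=\mathrm{col}(-C_Ly,y)$ for some $y$. Comparing components gives $y=G_Lu$ and $u=-C_Ly$. Substituting into $S_L$ shows $S_L w=0$, so the intersection is contained in $\ker S_L$.

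Therefore $\ker S_L=\mathcal G_L(G)\cap\mathcal G_L^{-1}(-C)$ as subspaces of $\mathcal H_L$, and injectivity of $S_L$ is equivalent to this intersection being $\{0\}$.

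The argument is essentially bookkeeping, and the only delicate point is consistency of sign and ordering conventions. The inverse graph is written as $\mathrm{col}(C_Ly,y)$, so $-C$ must be taken as the operator $y\mapsto -C_Ly$. Then $\mathcal G_L^{-1}(-C)=\{\mathrm{col}(-C_Ly,y)\}$, which matches the first block row $u+C_Ly=0$. With that convention in place, no step should present a real obstacle. In the finite-horizon setting domain issues also disappear, because $G_L$ and $C_L$ are defined on all of $\mathcal U_L$ and $\mathcal Y_L$.
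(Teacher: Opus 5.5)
Your proof is correct and takes the same approach as the paper. Both read off $u=-C_Ly$ and $y=G_Lu$ from the block rows of $S_L$, identify $\ker S_L$ with $\mathcal G_L(G)\cap\mathcal G_L^{-1}(-C)$, and conclude that injectivity is equivalent to this intersection being trivial.
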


\begin{proof}
$S_L$ is injective iff $\ker(S_L)=\{0\}$. For $w=\mathrm{col}(u,y)$, $w\in\ker(S_L)$ iff $u=-C_Ly$ and $y=G_Lu$, which is equivalent to $w\in\mathcal G_L(G)\cap\mathcal G_L^{-1}(-C)$. Hence, $\ker(S_L)=\mathcal G_L(G)\cap\mathcal G_L^{-1}(-C)$, and the result follows.
\end{proof}

\begin{lemma}[Surjectivity]\label{thm:surjectivity}
The finite-horizon closed-loop operator $S_L$ is surjective iff $\mathcal G_L(G)+\mathcal G_L^{-1}(-C)=\mathcal H_L.$
\end{lemma}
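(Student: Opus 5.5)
The plan is to compute the image of $S_L$ explicitly and match it to the sum of the two graphs, in the same spirit as Lemma~\ref{thm:injectivity}, where the kernel was identified with the intersection. For $w=\mathrm{col}(u,y)\in\mathcal U_L\oplus\mathcal Y_L$ we have $S_Lw=\mathrm{col}(u+C_Ly,\,G_Lu-y)$. This splits into a contribution depending only on $u$ and one depending only on $y$:
\[
S_Lw=\begin{bmatrix}u\\ G_Lu\end{bmatrix}+\begin{bmatrix}C_Ly\\ -y\end{bmatrix}.
\]
The first term ranges over $\mathcal G_L(G)$ as $u$ ranges over $\mathcal U_L$.

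The second term, after the substitution $y'=-y$, equals $\mathrm{col}(-C_Ly',y')$, which is exactly a generic element of $\mathcal G_L^{-1}(-C)$. Since $y\mapsto -y$ is a bijection of $\mathcal Y_L$, the second term ranges over all of $\mathcal G_L^{-1}(-C)$. Because $u$ and $y$ can be chosen independently, we obtain
\[
\mathrm{im}(S_L)=\mathcal G_L(G)+\mathcal G_L^{-1}(-C).
\]
For the inclusion "$\supseteq$", take any $g+h$ with $g=\mathrm{col}(u,G_Lu)$ and $h=\mathrm{col}(-C_Ly',y')$. Set $y=-y'$; then $S_L\,\mathrm{col}(u,y)=g+h$. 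For the inclusion "$\subseteq$", use the decomposition displayed above.

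Surjectivity of $S_L$ means $\mathrm{im}(S_L)=\mathcal H_L$, which is therefore equivalent to $\mathcal G_L(G)+\mathcal G_L^{-1}(-C)=\mathcal H_L$, and the lemma follows.

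The argument is purely algebraic and needs no step beyond careful bookkeeping. The one point to get right is the sign convention: the second block row is $G_Lu-y$, and this must be reconciled with the graph of $-C$ through the reparametrisation $y\mapsto -y$. Optionally, a remark can note the consequence on the finite-dimensional space $\mathcal H_L$: by rank–nullity, Lemma~\ref{thm:injectivity} and this lemma together show that well-posedness is equivalent to $\mathcal H_L=\mathcal G_L(G)\oplus\mathcal G_L^{-1}(-C)$.
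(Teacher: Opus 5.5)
Your proposal is correct and follows the paper's proof: both identify $\mathrm{im}(S_L)$ with $\mathcal G_L(G)+\mathcal G_L^{-1}(-C)$ via the split $S_Lw=\mathrm{col}(u,G_Lu)+\mathrm{col}(C_Ly,-y)$, using that $u$ and $y$ range independently. Your reparametrisation $y'=-y$, which uses linearity of $C_L$, spells out the sign bookkeeping the paper leaves implicit.
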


\begin{proof}
$S_L$ is surjective iff $\mathrm{im}(S_L)=\mathcal H_L$. For $w=\mathrm{col}(u,y)$, $S_Lw = \mathrm{col}(u,G_Lu)+\mathrm{col}(C_Ly,-y)$, where $\mathrm{col}(u,G_Lu)\in\mathcal G_L(G)$ and $\mathrm{col}(C_Ly,-y)\in\mathcal G_L^{-1}(-C)$. Since $u$ and $y$ are arbitrary, $\mathrm{im}(S_L)=\mathcal G_L(G)+\mathcal G_L^{-1}(-C)$, and the result follows.
\end{proof}

\begin{corollary}[Bijectivity]\label{thm:bijectivity}
The finite-horizon closed-loop operator $S_L$ is bijective iff $\mathcal H_L=\mathcal G_L(G)\oplus\mathcal G_L^{-1}(-C)$.
\end{corollary}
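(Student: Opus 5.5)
The plan is to obtain the corollary directly by combining Lemma~\ref{thm:injectivity} and Lemma~\ref{thm:surjectivity}, with no further computation. By definition, $S_L$ is bijective iff it is both injective and surjective. Lemma~\ref{thm:injectivity} says injectivity is equivalent to $\mathcal G_L(G)\cap\mathcal G_L^{-1}(-C)=\{0\}$. Lemma~\ref{thm:surjectivity} says surjectivity is equivalent to $\mathcal G_L(G)+\mathcal G_L^{-1}(-C)=\mathcal H_L$.

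Taking the conjunction of the two equivalences, $S_L$ is bijective iff the sum of the two subspaces is all of $\mathcal H_L$ and their intersection is trivial. By the notational convention fixed in the Notation paragraph, this is precisely the statement $\mathcal H_L=\mathcal G_L(G)\oplus\mathcal G_L^{-1}(-C)$. Both directions of the ``iff'' follow at once, because each lemma is itself an equivalence.

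There is essentially no obstacle. The only point worth checking is that both lemmas concern the same operator $S_L$ on the same finite-dimensional space $\mathcal H_L=\mathcal U_L\oplus\mathcal Y_L$, so the two conditions can be combined. This is true by construction.

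As an optional sanity remark, since $\dim\mathcal G_L(G)=mL$ and $\dim\mathcal G_L^{-1}(-C)=pL$ sum to $\dim\mathcal H_L$, either condition alone already implies the other. This is the rank--nullity statement for the square map $S_L$. It is not needed for the corollary, however.
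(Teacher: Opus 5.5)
Your proposal is correct and matches the paper, which gives no separate proof. It states the corollary as the immediate conjunction of Lemma~\ref{thm:injectivity} and Lemma~\ref{thm:surjectivity}, with $\oplus$ meaning sum equal to $\mathcal H_L$ plus trivial intersection; your dimension-count remark is also correct, and the paper uses essentially that observation in the proof of Theorem~\ref{thm:gaptheorem}.
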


The well-posedness characterisation is expressed in terms of the finite-horizon graph $\mathcal G_L(G)$, rather than the full restricted behaviour $\mathcal B_G|_L = \mathcal G_L(G) + \{\mathrm{col}(0,O_Lx_0):x_0\in\R^n\}$. We therefore introduce the corresponding graph-based finite-horizon distance.

\begin{definition}[Graph $L$-gap]\label{def:graphgap}
Let $G_1,G_2:\mathcal U_L\rightarrow\mathcal Y_L$, and let
$P_{\mathcal G_1}$ and $P_{\mathcal G_2}$ denote the orthogonal
projectors onto $\mathcal G_1\eqdef \mathcal G_L(G_1)$ and $\mathcal G_2\eqdef \mathcal G_L(G_2)$,
respectively. We define 
\begin{align*}
\mathrm{gap}^{\mathcal G}_L(G_1,\!G_2)
\!\eqdef\!\|P_{\mathcal G_1}\!\!-\!P_{\mathcal G_2}\|
\!=\! \max\left\{\!
\|P_{\mathcal G_2}^\perp\! P_{\mathcal G_1}\|,\!
\|P_{\mathcal G_1}^\perp\! P_{\mathcal G_2}\|
\!\right\}.
\end{align*}
If $\dim\mathcal G_1\equiv\dim\mathcal G_2$, then $\|P_{\mathcal G_2}^\perp P_{\mathcal G_1}\|=\|P_{\mathcal G_1}^\perp P_{\mathcal G_2}\|$. 
\end{definition}

In contrast to the $L$-gap in Def.~\ref{def:gap}, which compares restricted behaviours including arbitrary initial conditions, $\mathrm{gap}^{\mathcal G}_L$ compares the corresponding zero-state input--output maps. It is related to well-posedness of the closed loop in Thm.~\ref{thm:gaptheorem}.

\begin{theorem}\label{thm:gaptheorem}
Let $C$ be such that $[G_1,C]$ is well-posed on the
finite horizon $L$. Then $[G_2,C]$ is well-posed if
\begin{equation}\label{eq:gapthreshold}
\mathrm{gap}^{\mathcal G}_L(G_1,G_2)<
\varepsilon_L
\eqdef
1-\mathrm{gap}^{\mathcal G}_L(G_1,C^*).
\end{equation}
\end{theorem}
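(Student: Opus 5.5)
The plan is to turn well-posedness into a gap condition between $\mathcal G_L(G)$ and $\mathcal G_L(C^*)$, and then apply the triangle inequality for $\mathrm{gap}^{\mathcal G}_L$. Here $C^*:\mathcal U_L\to\mathcal Y_L$ is the adjoint of $C_L$, so $\mathcal G_L(C^*)=\{\mathrm{col}(u,C^*u)\}$ has dimension $mL$, the same as $\mathcal G_L(G_i)$.

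\textbf{Step 1: orthogonal complement.} I will show that $\mathcal G_L^{-1}(-C)^\perp=\mathcal G_L(C^*)$. A vector $\mathrm{col}(u,y)$ is orthogonal to every $\mathrm{col}(-C_Ly',y')$ iff
\[
-\langle C^*u,y'\rangle+\langle y,y'\rangle=0 \quad \text{for all } y',
\]
that is, iff $y=C^*u$. Hence $P_{\mathcal G_L(C^*)}=P^\perp_{\mathcal G_L^{-1}(-C)}$.

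\textbf{Step 2: well-posedness as a gap condition.} Corollary~\ref{thm:bijectivity} says $S_L$ is bijective iff $\mathcal H_L=\mathcal G_L(G)\oplus\mathcal G_L^{-1}(-C)$. The two subspaces have dimensions $mL$ and $pL$, which add up to $\dim\mathcal H_L$. So bijectivity is equivalent to the trivial-intersection condition of Lemma~\ref{thm:injectivity}, namely $\mathcal G_L(G)\cap\mathcal G_L(C^*)^\perp=\{0\}$.

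This holds iff no unit vector $w\in\mathcal G_L(G)$ satisfies $P_{\mathcal G_L(C^*)}w=0$, i.e.\ $\|P^\perp_{\mathcal G_L(C^*)}w\|=1$. The unit sphere of $\mathcal G_L(G)$ is compact and $\|P^\perp w\|\le1$, so the condition is equivalent to $\|P^\perp_{\mathcal G_L(C^*)}P_{\mathcal G_L(G)}\|<1$.

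The two graphs have equal dimension $mL$, so by the remark in Def.~\ref{def:graphgap} this norm equals $\mathrm{gap}^{\mathcal G}_L(G,C^*)$. The resulting criterion is
\[
[G,C]\ \text{well-posed} \iff \mathrm{gap}^{\mathcal G}_L(G,C^*)<1 .
\]
In particular, the hypothesis on $G_1$ gives $\varepsilon_L>0$.

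\textbf{Step 3: triangle inequality.} Since $\mathrm{gap}^{\mathcal G}_L$ is the operator norm of a difference of projectors, it satisfies the triangle inequality. Therefore
\[
\mathrm{gap}^{\mathcal G}_L(G_2,C^*)\le \mathrm{gap}^{\mathcal G}_L(G_2,G_1)+\mathrm{gap}^{\mathcal G}_L(G_1,C^*)<\varepsilon_L+1-\varepsilon_L=1,
\]
and Step 2 then gives well-posedness of $[G_2,C]$.

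The main obstacle is Step 2, specifically the equivalence between trivial intersection and a strict gap bound below $1$. It relies on the equal-dimension identity and on finite-dimensionality, through compactness and the dimension count that makes injectivity imply surjectivity. I expect to handle it exactly as sketched above.
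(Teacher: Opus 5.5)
Your proof is correct and follows the paper's argument: you identify $\mathcal G_L^{-1}(-C)^\perp=\mathcal G_L(C^*)$, use the dimension count to reduce bijectivity to a trivial-intersection condition equivalent to $\mathrm{gap}^{\mathcal G}_L(G_i,C^*)<1$, and close with the triangle inequality. The differences are cosmetic: you work with $\mathcal G_L(G)\cap\mathcal G_L(C^*)^\perp=\{0\}$, prove the gap criterion inline by compactness, and verify both inclusions of the complement identity, whereas the paper passes to $\mathcal G_L(G_i)^\perp\cap\mathcal G_L(C^*)=\{0\}$ and invokes Lemmas~\ref{app:thm:orthogonal_complement} and~\ref{app:thm:gapintersection}.
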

\begin{proof}
By Cor.~\ref{thm:bijectivity},
$[G_i,C]$ is well-posed iff $\mathcal H_L=\mathcal G_L(G_i)\oplus \mathcal G_L^{-1}(-C)$. Since $\dim(\mathcal G_L(G_i))\!=\!mL$, $\dim(\mathcal G_L^{-1}(-C))\!=\!pL$, and $\dim(\mathcal H_L)\!=\!(m+p)L$, this is equivalent to $\mathcal G_L(G_i)^\perp\!\!\cap\!\bigl(\mathcal G_L^{-1}(-C)\bigr)^\perp\!=\!\{0\}$. By Lemma~\ref{app:thm:orthogonal_complement}, $\bigl(\mathcal G_L^{-1}(-C)\bigr)^\perp\!\!=\!\mathcal G_L(C^*)$, so $\mathcal G_L(G_i)^\perp\!\!\cap\!\mathcal G_L(C^*)\!=\!\{0\}$, where $\dim(\mathcal G_L(G_i))\!\equiv\!\dim(\mathcal G_L(C^*))\!=\!mL$. By Lemma~\ref{app:thm:gapintersection}, $\mathcal G_L(G_i)^\perp\cap\mathcal G_L(C^*)=\{0\}$ iff $\mathrm{gap}^{\mathcal G}_L(G_i,C^*)<1$. Since $[G_1,C]$ is well-posed, $\mathrm{gap}^{\mathcal G}_L(G_1,C^*)<1$, and therefore $\varepsilon_L>0$. By the triangle inequality,   
\begin{align*}
\mathrm{gap}^{\mathcal G}_L(G_2,C^*)
&\leq
\mathrm{gap}^{\mathcal G}_L(G_1,G_2)
+\mathrm{gap}^{\mathcal G}_L(G_1,C^*)\\
&<\varepsilon_L
+\mathrm{gap}^{\mathcal G}_L(G_1,C^*)
=1.\\[-3em]
\end{align*}
\end{proof}

Theorem~\ref{thm:gaptheorem} requires the graph $L$-gap to be computed from input--output data. For restricted behaviours,~\cite{padoan2022} computes the $L$-gap by applying an SVD to the stacked input--output Hankel matrices $H_{L,1}^{\mathrm{io}}$ and $H_{L,2}^{\mathrm{io}}$, whose left singular vectors provide an orthonormal basis for the corresponding subspace. These matrices span $\mathcal{B}_1\vert L$ and $\mathcal{B}_2\vert L$, including trajectories associated with arbitrary initial conditions. In contrast, the graph $\mathcal G_L(G)$ contains only zero-state trajectories. To extract this subspace from data, we partition Hankel matrices of depth $T_{\mathrm{ini}}+L$ as
\begin{align}\label{eq:datadefinition}
&H_{T_{\mathrm{ini}}+L}(u)\!=\!\col\left(U_p, U_f\!\right),
&H_{T_{\mathrm{ini}}+L}(y)\!=\!\col\left(Y_p, Y_f\!\right),
\end{align}
where $U_p$ and $Y_p$ have $mT_{\mathrm{ini}}$ and $pT_{\mathrm{ini}}$ rows, respectively.
The following proposition shows that imposing a zero input--output past extracts the finite-horizon graph from~\eqref{eq:datadefinition}.
\begin{proposition}[Data-driven finite-horizon graph]
\label{prop:datagraph}
Consider a controllable and observable LTI system of order $n$, let $T_{\mathrm{ini}}\geq n$, and assume that $u$ is persistently exciting
of order $T_{\mathrm{ini}}+L+n$. Let $N_p$ be a matrix whose columns form a basis for $\ker(\col(U_p, Y_p))$. Then $\mathcal G_L(G) =\mathrm{im}\left(\col(U_f, Y_f)N_p\right)$.
\end{proposition}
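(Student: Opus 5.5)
We prove the two inclusions separately, using Willems' Fundamental Lemma at depth $T_{\mathrm{ini}}+L$ together with observability.

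\emph{Step 1 (data spans the restricted behaviour).} Since the system is controllable and $u$ is persistently exciting of order $T_{\mathrm{ini}}+L+n$, the Fundamental Lemma (as recalled in Section~\ref{sec:background}) gives
\[
\mathcal B_G|_{T_{\mathrm{ini}}+L}=\mathrm{im}\,\col(U_p,Y_p,U_f,Y_f).
\]
Thus every column $g$ generates a genuine length-$(T_{\mathrm{ini}}+L)$ trajectory $\col(u_p,y_p,u_f,y_f)$, with some initial state $x_0$. Conversely, every such trajectory arises from some $g$.

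\emph{Step 2 (inclusion $\subseteq$).} Take $\col(u,G_Lu)\in\mathcal G_L(G)$. Consider the trajectory starting at $x_0=0$ with input $\col(0,u)$. Its past output is $0$ and its state at time $T_{\mathrm{ini}}$ is $0$, so its future output is $G_Lu$. By Step 1 there is $g$ with
\[
\col(U_p,Y_p,U_f,Y_f)\,g=\col(0,0,u,G_Lu).
\]
Hence $g\in\ker\col(U_p,Y_p)$, so $g=N_p\alpha$ for some $\alpha$, and $\col(u,G_Lu)=\col(U_f,Y_f)N_p\alpha$.

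\emph{Step 3 (inclusion $\supseteq$).} Take $\alpha$ and set $g=N_p\alpha$. By Step 1, $\col(U_p,Y_p,U_f,Y_f)g$ is a trajectory with $u_p=0$ and $y_p=0$. With zero input, $y_p=O_{T_{\mathrm{ini}}}x_0$. Since the system is observable and $T_{\mathrm{ini}}\ge n$, $O_{T_{\mathrm{ini}}}$ has full column rank, so $y_p=0$ forces $x_0=0$. Then $x_{T_{\mathrm{ini}}}=A^{T_{\mathrm{ini}}}x_0=0$, so the future part satisfies $y_f=G_Lu_f$ and therefore lies in $\mathcal G_L(G)$.

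\emph{Main obstacle.} The delicate point is Step 3: we must justify that zero past input and output pin down the state at time $T_{\mathrm{ini}}$. This needs exactly the observability condition $T_{\mathrm{ini}}\ge n$, i.e.\ the rank of the observability matrix. We must also check that the excitation order $T_{\mathrm{ini}}+L+n$ is the one required by the Fundamental Lemma at depth $T_{\mathrm{ini}}+L$.
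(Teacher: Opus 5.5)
Your proof is correct and follows essentially the same route as the paper's. Both use the Fundamental Lemma at depth $T_{\mathrm{ini}}+L$, prepend $T_{\mathrm{ini}}$ zero samples to a zero-state trajectory for one inclusion, and use observability with $T_{\mathrm{ini}}\geq n$ to force $x_0=0$ for the other. You are slightly more explicit than the paper in noting that $u_p=0$ and $x_0=0$ give $x_{T_{\mathrm{ini}}}=0$, which is the step that actually places the future window in $\mathcal G_L(G)$.
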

\begin{proof}
By Willems' Fundamental Lemma, every length-$(T_{\mathrm{ini}}+L)$
trajectory can be written as
\[
\col(u_p,y_p,u_f,y_f)=
\col\left(U_p, Y_p, U_f, Y_f\right) g,
\]
for some coefficient vector $g$. Let $g=N_p\alpha$. Then $U_pg=0=u_p$ and $Y_pg=0=y_p$. With $u_p=0$, $y_p=O_{T_{\mathrm{ini}}}x(0)$, and since the system is observable and $T_{\mathrm{ini}}\geq n$, $y_p=0$ implies $x(0)=0$. Consequently, $\col(U_f, Y_f) g \in\mathcal G_L(G)$. Conversely, any trajectory in $\mathcal G_L(G)$ starts from the zero state and may therefore be preceded by $T_{\mathrm{ini}}$ zero input--output samples. The resulting length-$(T_{\mathrm{ini}}+L)$ trajectory is represented by the Hankel matrix by the Fundamental Lemma with coefficient vector $g$ satisfying $U_pg=Y_pg=0$, and hence $g\in\mathrm{im}(N_p)$.
\end{proof}

\subsection{Performance}

We next quantify the difference between the finite-horizon closed-loop trajectories of a leader $G_1$ and follower $G_2$ when the same controller $C$ and exogenous input $e\eqdef\col(d,r)$ are used. By Thm.~\ref{thm:gaptheorem}, $\mathrm{gap}^{\mathcal G}_L(G_1,G_2)<1-\mathrm{gap}^{\mathcal G}_L(G_1,C^*)$ guarantees that both interconnections are well-posed.

For each $i\in\{1,2\}$, well-posedness gives a unique decomposition $e=w_i+n_i$, where $w_i\in\mathcal G_L(G_i)$ and $n_i\in\mathcal G_L^{-1}(-C)$. Let $P_C$ denote the orthogonal projector onto $\mathcal G_L(C^*)$. Since $\mathcal G_L^{-1}(-C)=\mathcal G_L(C^*)^\perp$, projecting the decomposition onto $\mathcal G_L(C^*)$ gives
\begin{align}\label{eq:closedlooptrajectory}
P_Ce=P_Cw_i=P_{C|i}w_i
\quad\Longrightarrow\quad
w_i=P_{C|i}^{-1}P_Ce,
\end{align}
where $P_{C|i}:\mathcal G_L(G_i)\rightarrow\mathcal G_L(C^*)$ denotes the restriction of $P_C$ to $\mathcal G_L(G_i)$. Since $\mathrm{gap}^{\mathcal G}_L(G_i,C^*)<1$, Lemma~\ref{app:thm:restrictedprojector} guarantees that $P_{C|i}$ is bijective.

The right-hand side of~\eqref{eq:closedlooptrajectory} allows trajectory differences between leader and followers to be bounded as shown in Prop.~\ref{thm:trajectorybound}.

\begin{proposition}\label{thm:trajectorybound}
Under the assumptions of Thm.~\ref{thm:gaptheorem}, the corresponding closed-loop trajectories satisfy $\|w_2-w_1\|\leq\eta_w\|e\|$, where
\begin{align}\label{eq:trajectorybound}
\eta_w\eqdef
\frac{D_{12}}{\sqrt{1-\delta^2}}
\frac{\sqrt{2-(\delta+D_{12})^2}}
{\sqrt{1-(\delta+D_{12})^2}},
\end{align}
and $D_{12}\eqdef\mathrm{gap}^{\mathcal G}_L(G_1,G_2)$ and $\delta\eqdef\mathrm{gap}^{\mathcal G}_L(G_1,C^*)$.
\end{proposition}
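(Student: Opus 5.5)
The plan is to work directly with the representation \eqref{eq:closedlooptrajectory}, $w_i=P_{C|i}^{-1}P_Ce$, and to compare $w_2$ with $w_1$ by passing through the orthogonal projection of $w_1$ onto $\mathcal G_L(G_2)$. Write $\mathcal G_i\eqdef\mathcal G_L(G_i)$, $v\eqdef P_Ce$, and $\delta_2\eqdef\mathrm{gap}^{\mathcal G}_L(G_2,C^*)$. By the triangle-inequality step in the proof of Thm.~\ref{thm:gaptheorem}, $\delta_2\le\delta+D_{12}<1$.

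\textbf{Step 1 (inverse bounds).} I would first establish $\|P_{C|i}^{-1}\|\le 1/\sqrt{1-\mathrm{gap}^{\mathcal G}_L(G_i,C^*)^2}$. The subspaces $\mathcal G_i$ and $\mathcal G_L(C^*)$ have equal dimension $mL$. For unit $x\in\mathcal G_i$ we have $\|P_Cx\|^2=1-\|P_C^\perp x\|^2\ge 1-\|P_C^\perp P_{\mathcal G_i}\|^2$, and by Def.~\ref{def:graphgap} with equal dimensions this equals $1-\mathrm{gap}^{\mathcal G}_L(G_i,C^*)^2$. This is the quantitative content behind Lemma~\ref{app:thm:restrictedprojector}. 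Since $\|P_C\|\le1$, it gives $\|w_1\|\le\|e\|/\sqrt{1-\delta^2}$.

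\textbf{Step 2 (algebraic identity).} Define $Q_2\eqdef P_{C|2}^{-1}P_C$. This is the oblique projector onto $\mathcal G_2$ along $\mathcal G_L(C^*)^\perp=\mathcal G_L^{-1}(-C)$; it is idempotent and fixes $\mathcal G_2$. Because $w_2=Q_2e=Q_2w_1$, and $Q_2w_1=Q_2v$ holds since $P_Cw_1=v$, we get
\[
w_2-w_1=(Q_2-I)w_1=(Q_2-I)P_{\mathcal G_2}^\perp w_1 ,
\]
where the last equality uses $(Q_2-I)P_{\mathcal G_2}w_1=0$. Moreover $\|P_{\mathcal G_2}^\perp w_1\|=\|P_{\mathcal G_2}^\perp P_{\mathcal G_1}w_1\|\le D_{12}\|w_1\|$.

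\textbf{Step 3 (norm of the complementary projector).} It remains to bound $\|I-Q_2\|$. The sharp route uses the classical identity $\|I-Q\|=\|Q\|$ for idempotents $Q\neq0,I$. Combined with Step 1, $\|Q_2\|\le\|P_{C|2}^{-1}\|\le 1/\sqrt{1-\delta_2^2}$, so
\[
\|w_2-w_1\|\le\frac{D_{12}\|e\|}{\sqrt{1-\delta^2}\sqrt{1-(\delta+D_{12})^2}} .
\]
This is already at most $\eta_w\|e\|$, because $\sqrt{2-(\delta+D_{12})^2}\ge1$.

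If I prefer to avoid citing that identity, I would use a cruder but self-contained bound that reproduces \eqref{eq:trajectorybound} exactly. For $x\in\mathcal G_2^\perp$, $Q_2x\in\mathcal G_2$ is orthogonal to $x$, so Pythagoras gives $\|(Q_2-I)x\|^2=\|Q_2x\|^2+\|x\|^2$. Hence
\[
\|(Q_2-I)|_{\mathcal G_2^\perp}\|\le\sqrt{1+\|Q_2\|^2}\le\sqrt{1+\tfrac{1}{1-(\delta+D_{12})^2}}=\frac{\sqrt{2-(\delta+D_{12})^2}}{\sqrt{1-(\delta+D_{12})^2}} .
\]
Multiplying this by $D_{12}/\sqrt{1-\delta^2}$ from Step 2 yields exactly $\eta_w$.

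\textbf{Main obstacle.} The step needing most care is Step 1: the lower bound on $\|P_Cx\|$ over $\mathcal G_i$ in terms of the graph gap. It relies on the equal-dimension symmetry $\|P_C^\perp P_{\mathcal G_i}\|=\|P_{\mathcal G_i}^\perp P_C\|$ stated in Def.~\ref{def:graphgap}, and on the fact that both $\mathcal G_i$ and $\mathcal G_L(C^*)$ have dimension $mL$. The remaining steps are routine algebra with projectors.
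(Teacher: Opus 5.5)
Your proof is correct, and its skeleton is the paper's. The paper also passes through $\tilde b_2=P_{\mathcal G_2}w_1$. It splits $w_2-w_1$ into the orthogonal pieces $Q_2P_{\mathcal G_2}^\perp w_1\in\mathcal G_2$ (its $b_2-\tilde b_2=P_{C|2}^{-1}P_C(b_1-\tilde b_2)$) and $-P_{\mathcal G_2}^\perp w_1\in\mathcal G_2^\perp$. It then applies Pythagoras with $\|P_C\|\le1$, so your self-contained Step 3 is exactly the paper's derivation of \eqref{eq:trajectorybound}.

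Where you genuinely differ is the sharp route. The identity $\|I-Q\|=\|Q\|$ for an idempotent $Q\neq0,I$ on a Hilbert space applies here, since $Q_2$ has range of dimension $mL$ and kernel of dimension $pL$. It removes the factor $\sqrt{2-(\delta+D_{12})^2}$ and gives the strictly smaller constant $D_{12}/\bigl(\sqrt{1-\delta^2}\sqrt{1-(\delta+D_{12})^2}\bigr)$.

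You can get the same improvement without citing that identity. For $x\in\mathcal G_2^\perp$ you have $\|P_Cx\|\le\|P_CP_{\mathcal G_2}^\perp\|\,\|x\|=\|P_{\mathcal G_2}^\perp P_C\|\,\|x\|\le\delta_2\|x\|$, not merely $\|x\|$. Hence $\|Q_2x\|\le\delta_2\|x\|/\sqrt{1-\delta_2^2}$, and Pythagoras gives $\|(I-Q_2)x\|\le\|x\|/\sqrt{1-\delta_2^2}$. The paper's extra factor therefore comes only from bounding $P_C$ by $1$ on $\mathcal G_2^\perp$.

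One small correction to your ``main obstacle'': the lower bound in Step 1 does not need the equal-dimension symmetry. The inequality $\|P_C^\perp P_{\mathcal G_i}\|\le\mathrm{gap}^{\mathcal G}_L(G_i,C^*)$ follows directly from the max in Def.~\ref{def:graphgap}. Equal dimensions are needed only to upgrade injectivity of $P_{C|i}$ to bijectivity.
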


\begin{proof}
From~\eqref{eq:closedlooptrajectory}, $w_2-w_1=(P_{C|2}^{-1}-P_{C|1}^{-1})P_Ce$. For any $x\in\mathcal G_L(C^*)$, let $b_i=P_{C|i}^{-1}x\in\mathcal G_L(G_i)$ and $\tilde b_2=P_2b_1$, where $P_2$ projects onto $\mathcal G_L(G_2)$. Then $\|b_1-\tilde b_2\|=\|(P_1-P_2)b_1\|\leq D_{12}\|P_{C|1}^{-1}\|\|x\|$. Moreover, since $P_Cb_1=P_Cb_2=x$, $b_2-\tilde b_2=P_{C|2}^{-1}P_C(b_1-\tilde b_2)$. Since $b_2-\tilde b_2\in\mathcal G_L(G_2)$ and $b_1-\tilde b_2\in\mathcal G_L(G_2)^\perp$, these terms are orthogonal, and therefore
\begin{align}\label{eq:inverseprojectorbound}
\|(P_{C|2}^{-1}-P_{C|1}^{-1})x\|^2
\leq
\left(\!1\!+\!\|P_{C|2}^{-1}\|^2\!\right)
D_{12}^2\|P_{C|1}^{-1}\|^2\|x\|^2.
\end{align}
Let $\delta_i\eqdef\mathrm{gap}^{\mathcal G}_L(G_i,C^*)$. Lemma~\ref{app:thm:restrictedprojector} gives $\|P_{C|i}^{-1}\|\leq1/\sqrt{1-\delta_i^2}$, while $\delta_1=\delta$ and $\delta_2\leq\delta+D_{12}$ by the triangle inequality. Substitution into~\eqref{eq:inverseprojectorbound}, together with $\|P_C\|=1$, yields~\eqref{eq:trajectorybound}.
\end{proof}

Consider the quadratic finite-horizon performance measure
\begin{align}\label{eq:cost}
J(w)
\eqdef\sum_{k=0}^{L-1}\left(y_k^\Tr Qy_k+u_k^\Tr Ru_k\right)
=w^\Tr Ww,
\end{align}
where $w=\col(u,y)$, $Q=Q^\Tr\succeq0$, $R=R^\Tr\succeq0$, and $W\eqdef\diag(I_L\otimes R,I_L\otimes Q)$. Proposition~\ref{thm:performancebound} formulates a bound on the performance difference $J(w_2)-J(w_1)$ for the case that the same exogenous inputs $e$ are applied to each system.

\begin{proposition}\label{thm:performancebound}
Under the conditions of Prop.~\ref{thm:trajectorybound}, the performance difference satisfies $|J(w_2)-J(w_1)|\leq\eta_J\|e\|^2$, where $\eta_J\eqdef\eta_w\|W\|\left(1/\sqrt{1-\delta^2}+1/\sqrt{1-(\delta+D_{12})^2}\right)$.
\end{proposition}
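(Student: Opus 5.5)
The argument rests on the difference-of-quadratics identity
\[
J(w_2)-J(w_1)=w_2^\Tr Ww_2-w_1^\Tr Ww_1=(w_2-w_1)^\Tr W(w_2+w_1),
\]
which holds because $W$ is symmetric. Applying Cauchy--Schwarz and the operator norm gives
\[
|J(w_2)-J(w_1)|\leq\|W\|\,\|w_2-w_1\|\,\|w_1+w_2\|.
\]
Proposition~\ref{thm:trajectorybound} already bounds the first factor: $\|w_2-w_1\|\leq\eta_w\|e\|$. What remains is to show
\[
\|w_1+w_2\|\leq\left(1/\sqrt{1-\delta^2}+1/\sqrt{1-(\delta+D_{12})^2}\right)\|e\|.
\]

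To bound $\|w_1+w_2\|$, I use the triangle inequality $\|w_1+w_2\|\leq\|w_1\|+\|w_2\|$ and bound each closed-loop trajectory separately. By \eqref{eq:closedlooptrajectory}, $w_i=P_{C|i}^{-1}P_Ce$. Hence
\[
\|w_i\|\leq\|P_{C|i}^{-1}\|\,\|P_C\|\,\|e\|\leq\|P_{C|i}^{-1}\|\,\|e\|,
\]
since $P_C$ is an orthogonal projector and so $\|P_C\|\leq1$.

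Lemma~\ref{app:thm:restrictedprojector}, used as in the proof of Prop.~\ref{thm:trajectorybound}, gives $\|P_{C|i}^{-1}\|\leq1/\sqrt{1-\delta_i^2}$. For $i=1$ we have $\delta_1=\delta$. For $i=2$, the triangle inequality for $\mathrm{gap}^{\mathcal G}_L$ gives $\delta_2\leq\delta+D_{12}$. Under the hypothesis of Theorem~\ref{thm:gaptheorem} we have $\delta+D_{12}<1$, so the square root is real and positive. Since $t\mapsto1/\sqrt{1-t^2}$ is increasing on $[0,1)$, it follows that $\|P_{C|2}^{-1}\|\leq1/\sqrt{1-(\delta+D_{12})^2}$. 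Combining these estimates gives exactly $\eta_J\|e\|^2$.

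No step here is a serious obstacle. The one point needing care is that $w_i$ is obtained through the inverse of the \emph{restricted} projector $P_{C|i}$, so the norm bound must come from Lemma~\ref{app:thm:restrictedprojector} and not from $\|P_C\|$ alone. This requires $\delta_i<1$, which is guaranteed by well-posedness of both loops under \eqref{eq:gapthreshold}. The monotonicity step transferring $\delta_2\leq\delta+D_{12}$ to the inverse bound should also be stated explicitly.
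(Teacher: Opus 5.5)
Your proposal is correct and follows essentially the same route as the paper. Both use the difference-of-quadratics identity with symmetric $W$, the Cauchy--Schwarz/operator-norm bound, Prop.~\ref{thm:trajectorybound} for $\|w_2-w_1\|$, and the triangle inequality with $w_i=P_{C|i}^{-1}P_Ce$, Lemma~\ref{app:thm:restrictedprojector} and $\delta_2\leq\delta+D_{12}$ for $\|w_1+w_2\|$; you additionally make explicit the monotonicity of $t\mapsto 1/\sqrt{1-t^2}$ and the requirement $\delta+D_{12}<1$, which the paper leaves implicit.
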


\begin{proof}
Using~\eqref{eq:cost}, $J(w_2)-J(w_1)=(w_2-w_1)^\Tr W(w_2+w_1)$, so $|J(w_2)-J(w_1)|\leq\|W\|\|w_2-w_1\|\|w_2+w_1\|$. From~\eqref{eq:closedlooptrajectory} and Lemma~\ref{app:thm:restrictedprojector}, $\|w_1+w_2\|\leq\left(1/\sqrt{1-\delta^2}+1/\sqrt{1-(\delta+D_{12})^2}\right)\|e\|$. Combining this with Prop.~\ref{thm:trajectorybound} gives the result.
\end{proof}

\section{Case Study: Cybergenetic Control} \label{sec:sim}

We consider the application of our data-based clustering and control framework to bacteria cultured in microfluidic platforms~\cite{KHAMMASHREVIEW}, which can involve populations of up to $10^6$ cells. Controlling such large populations at experimentally relevant sampling rates can impose a substantial computational burden, motivating approaches that reduce the number of controller synthesis problems while maintaining sufficient control performance. Here, the approach is evaluated using $N=100$ systems on a standard laptop (Intel Core i7 processor, 8 GB RAM).

\subsection{Biological Model}

In many cybergenetic applications, the controlled quantities involve the expression levels of one or several genes of interest, measured indirectly via fluorescent reporter proteins. As an example, we simulate an \textit{E. coli} strain expressing a GFP reporter whose concentration is regulated by the optogenetic \textit{CcaS/CcaR} system~\cite{Olson2014}, with parameter values chosen from ranges reported in~\cite{weisse_mechanistic_links,mRNA_degradation,KLUMPP20091366}. The dynamics of this system can be approximated as
\begin{subequations}\label{eq:systems}
\begin{align}
\dot{m}(t) &= k_{m} + f(u(t)) - \gamma_m\, m(t), \\
\dot{M}(t) &= k_M\, m(t) - \gamma_M\, M(t),
\end{align}
\end{subequations}
where $m(t)$ is the mRNA concentration, $M(t)$ is the GFP concentration, $k_m=\SI{1.3}{\per\minute}$ is the basal transcription rate, $k_M=\SI{1}{\per\minute}$ is the translation rate, and $\gamma_m=\SI{0.14}{\per\minute}$ and $\gamma_M=\SI{0.02}{\per\minute}$ are the mRNA and protein degradation/dilution rates, respectively. The optogenetic actuation $f(u(t))$ is modelled by a Hill function, $f(u)=\alpha u^n/(K_f^n+u^n)$, with $\alpha=\SI{1.3}{\per\minute}$, $K_f=1$, and $n=2$, where $u(t)\in[0,u_{\max}]$ is the input light intensity. For the following analysis, the system is linearised at $u^\star=K_f$, $m^\star=(k_m+f(u^\star))/\gamma_m$, and $M^\star=k_Mm^\star/\gamma_M$. As the subsequent sections use state feedback, it is assumed that both $m(t)$ and $M(t)$ are measurable, resulting in the state-space representation $A={\setlength{\arraycolsep}{1pt}\begin{bmatrix}-\gamma_m & 0 \\[-.1em] k_M& -\gamma_M\end{bmatrix}}$, $B={\setlength{\arraycolsep}{2pt}\begin{bmatrix}\alpha n / 4K_f\\[-.1em] 0\end{bmatrix}}$, $C=I$, and $D=0$. The system is then discretised using zero-order hold with a sample time of $T_s=\SI{10}{\minute}$.

\subsection{Data-generation}

To capture cell-to-cell variability, $\gamma_m$, $\gamma_M$, $k_m$, and $k_M$ are sampled independently and uniformly between $0.8$ and $1.2$ times their nominal values, consistent with observed stochasticity in gene expression~\cite{stochastic_cells}, while $\alpha$, $K_f$, and $n$ are kept fixed. The current formulation assumes linear dynamics around a nominal operating point. For nonlinear systems, local linearisation may introduce input constraints, which can be accommodated via appropriate input transformations~\cite{perreault2026}.

In practice, the system models are unknown, and the graph subspaces are constructed directly from input--output data using Prop.~\ref{prop:datagraph}. For each system, noise-free input--output data are used to form the partitioned Hankel matrices in~\eqref{eq:datadefinition}, with $L=10$ and $T_{\mathrm{ini}}=n=2$, and the input is chosen persistently exciting of order $T_{\mathrm{ini}}+L+n$. If $N_p$ spans $\ker(\col(U_p,Y_p))$, then $Z\eqdef\col(U_f,Y_f)N_p$ spans $\mathcal G_L(G)$. An SVD of $Z$ provides an orthonormal basis $Q$ for this subspace~\cite{padoan2022}, from which the corresponding projector $P_{\mathcal G}=QQ^\Tr$ and graph $L$-gaps are computed. 

\subsection{Similarity-Based Clustering}

Given the pairwise graph $L$-gap distances, define the distance matrix $D\in\R^{N\times N}$ by $D_{ij}\eqdef\mathrm{gap}^{\mathcal G}_L(G_i,G_j)$. We partition the $N$ systems into $K\leq N$ nonempty, disjoint clusters $\{\mathcal C_1,\ldots,\mathcal C_K\}$ such that $\bigcup_{k=1}^K\mathcal C_k=\{1,\ldots,N\}$. For each cluster $\mathcal C_k$, one system $i_k\in\mathcal C_k$ is designated as the \emph{leader}, while the remaining systems are \emph{followers}, and a single controller $C_k$ designed for the leader is applied to all systems in $\mathcal C_k$. The clustering objective is to group systems with small pairwise graph $L$-gap. In particular, Thm.~\ref{thm:gaptheorem} provides a sufficient leader--follower distance for preserving well-posedness, while Prop.~\ref{thm:performancebound} shows that smaller distances also reduce the bound on performance degradation.

To achieve this, we adopt a farthest-point first clustering strategy. The algorithm selects initial members that are maximally separated in terms of the $L$-gap, thereby ensuring coverage of the diversity in system behaviours and dissimilar systems will not be clustered together. Classical clustering methods, such as $K$-means, may not work well as the metric is not a Euclidean distance.

\begin{figure}[!t]
\centering

\begin{subfigure}{0.96\columnwidth}
    \centering
    \includegraphics[
        width=\linewidth,
        height=0.18\textheight,
        keepaspectratio
    ]{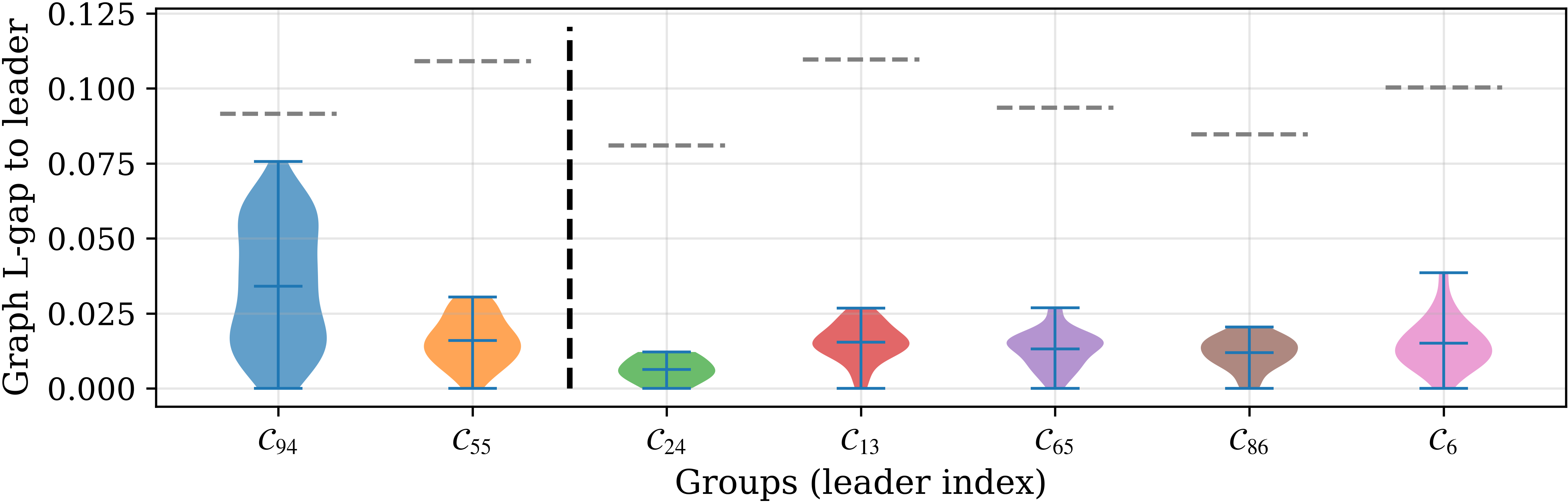}
    \caption{}
    \label{fig:clustering_violin}
\end{subfigure}

\vspace{1em}

\begin{subfigure}{0.96\columnwidth}
    \centering
    \includegraphics[
        width=\linewidth,
        height=0.18\textheight,
        keepaspectratio
    ]{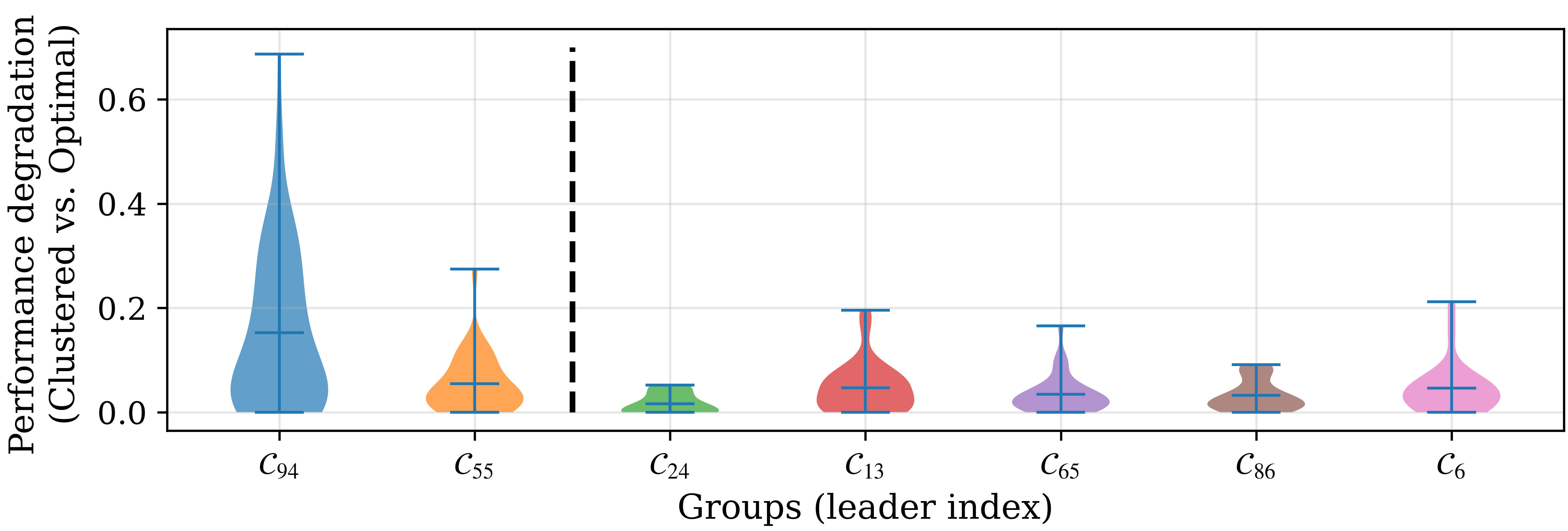}
    \caption{}
    \label{fig:error_violin}
\end{subfigure}

\vspace{1em}

\begin{subfigure}{0.96\columnwidth}
    \centering
    \includegraphics[
        width=\linewidth,
        height=0.18\textheight,
        keepaspectratio
    ]{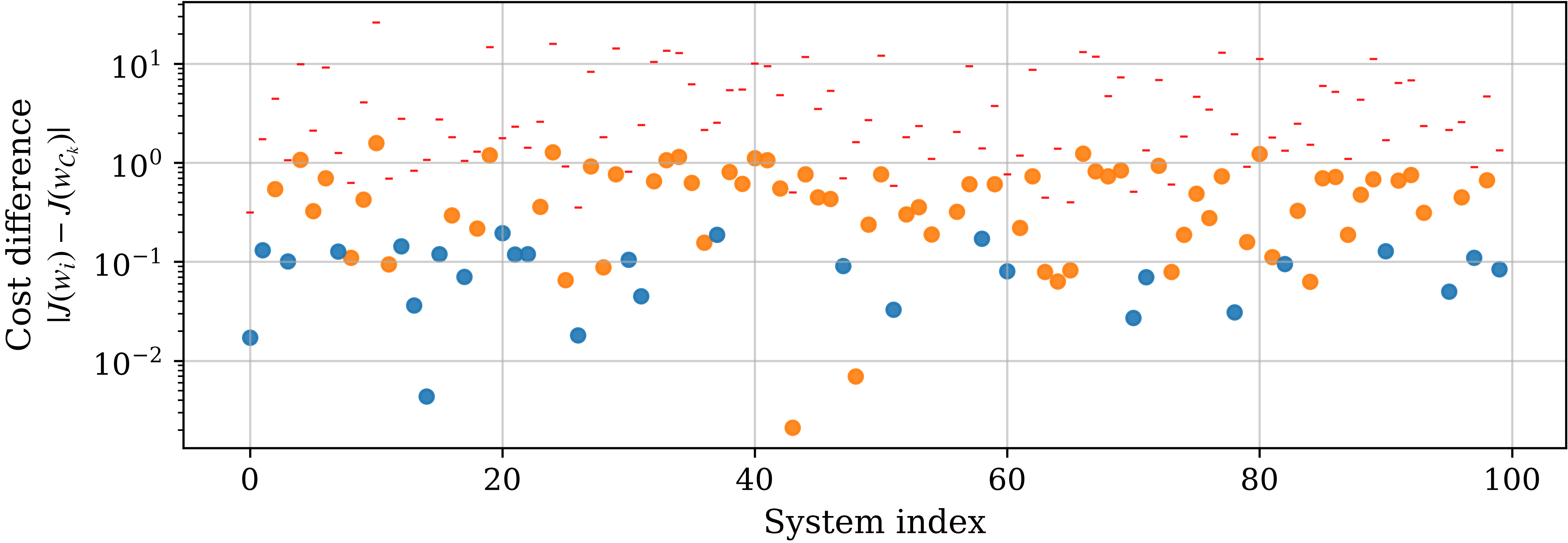}
    \caption{}
    \label{fig:performance_difference}
\end{subfigure}

\vspace{1em}

\begin{subfigure}{0.96\columnwidth}
    \centering
    \includegraphics[
        width=\linewidth,
        height=0.18\textheight,
        keepaspectratio
    ]{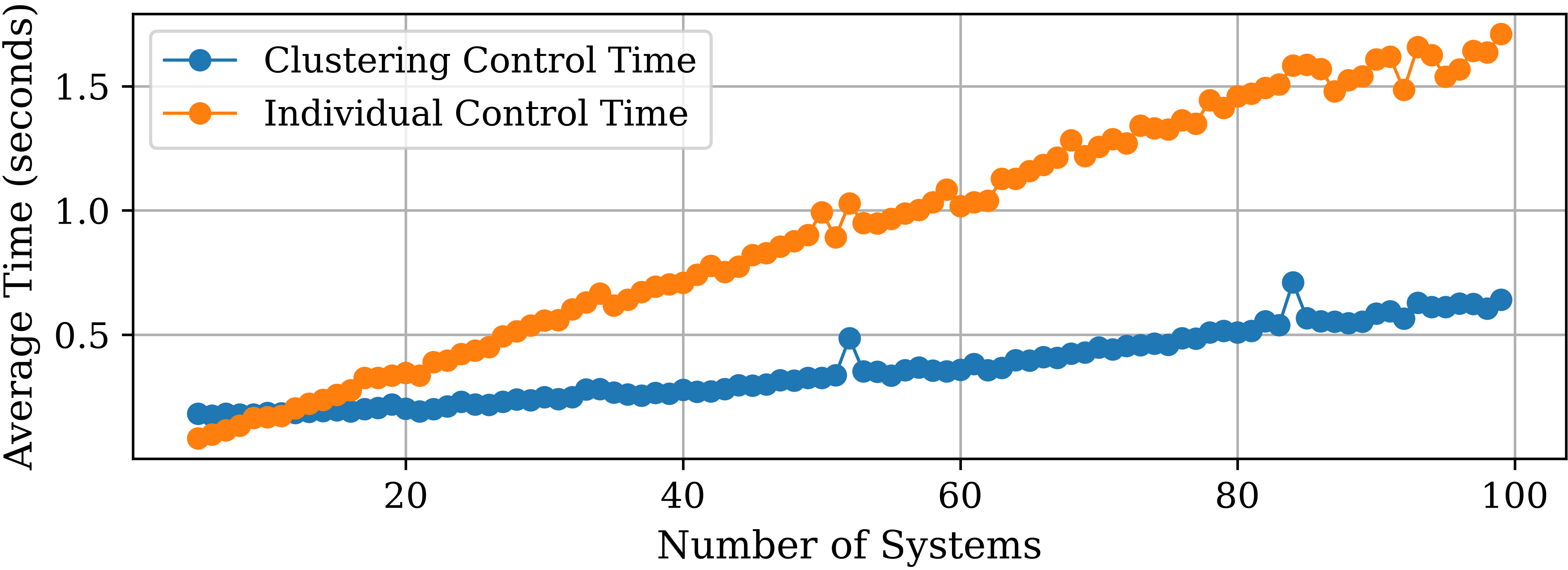}
    \caption{}
    \label{fig:runtime_comparison}
\end{subfigure}

\vspace{0em}

\caption{%
Evaluation of the proposed clustering-based control framework.
(a) Distributions of the graph $L$-gaps between each follower and its assigned leader for clusterings with $K=2$ and $K=5$. Horizontal dashed lines indicate the corresponding $\epsilon_L$ thresholds from Thm.~\ref{thm:gaptheorem}, while the vertical dashed line separates the two clustering configurations.
(b) Performance degradation relative to individually optimised control, measured by $J(w_i)-J(w_i^\star)$.
(c) Absolute cost difference between each follower and its leader for $K=2$, measured by $|J(w_i)-J(w_{\mathcal{C}_k})|$; horizontal red markers indicate the bound from Prop.~\ref{thm:performancebound}.
(d) Runtime comparison between the proposed cluster-and-control method and the baseline in which an individual controller is designed for each system.
}
\label{fig:combined_results}
\end{figure}

The algorithm is applied to cluster the $100$ bacterial systems and the result is shown in Fig.~\ref{fig:clustering_violin}. The y-axis shows the $L$-gap to the leader of each cluster $\mathcal{C}_k$ and the dotted grey lines indicate the well-posedness bound $\epsilon_L$ from Thm.~\ref{thm:gaptheorem}. If only well-posedness is ensured, the systems are clustered into two clusters as shown on the left of the black line. For comparison, the result obtained by enforcing five clusters—--yielding groups with more closely matched behaviours—--is shown on the right of the same line.

\subsection{Leader--Follower Control Architecture}

Based on the clustering obtained in the previous part, we implement a leader--follower control architecture. For each cluster $\mathcal{C}_k$ a controller $C_k$ is designed solely for a representative system (leader).

Here, we employ a data-driven linear quadratic regulator (LQR) formulation~\cite{PersisLQR}. As a preliminary study, we assume that the measurements are noise-free and that all states are directly measurable---an assumption that could be relaxed by using a data-based observer formulation. Here, we use data from the leader only, but future work could exploit clustering to average data across each cluster. Applying the same controller $C_k$ to all systems within a cluster allows multiple systems to share a single control design, thereby reducing the total number of control synthesis problems from $N$ to $K$. For this data-driven LQR implementation, the gain matrix is obtained by solving a convex SDP~\cite[Thm.~4]{PersisLQR}, with decision variables $Q \in \mathbb{R}^{T \times n}$ and $X \in \mathbb{S}_+^{m}$ (here, $n\equiv m$), and two LMIs of sizes $(m+n)$ and $2n$. The dominant computational cost is therefore the SDP solve, which scales as $\mathcal{O}(T^3 n^3)$ to leading order. Consequently, synthesising controllers for $N$ systems scales as $\mathcal{O}(N T^3 n^3)$. By grouping the systems into $K \ll N$ representative clusters and solving one controller per group, this cost reduces to $\mathcal{O}(K T^3 n^3)$. Moreover, while in this preliminary study the gain matrices are computed only once, in practice the input--output data might be updated as the linearisation point of the nonlinear system changes (see e.g.\ our previous work~\cite{perreault2026}). Moreover, for data-based predictive control (DeePC), clustering would allow one to re-use factorisations of the Hessian.

Since the pairwise gaps between each follower and the cluster leader satisfy the assumptions of Thm.~\ref{thm:gaptheorem}, the closed-loop systems are well-posed. Fig.~\ref{fig:error_violin} shows the performance degradation, quantified as the difference in integrated squared errors between clustered control and individually designed control with a disturbance impulse $d_0=2^\Tr$. For all the systems, the error does not change drastically, showing the reliability of the clustering method. In addition, as expected, partitioning the systems into smaller clusters further reduces the degradation, allowing one to trade cluster size for better performance.

Fig.~\ref{fig:performance_difference} shows the cost differences (as defined in~\eqref{eq:cost}) between the leader and follower systems for the two-cluster case and with a disturbance step ($d=1^\Tr$ in Fig.~\ref{fig:system}). The cluster members are re-indexed and separated by the vertical dotted line. All observed differences lie well within the theoretical bounds, confirming the validity of Prop.~\ref{thm:performancebound}. The results show that by partitioning the 100 systems into two clusters, the number of required controllers is reduced from 100 to 2, corresponding to a 98\% reduction in control synthesis effort. Despite this substantial reduction, performance remains consistent across each cluster. When taking the runtime of clustering procedure into consideration, the runtime of this method is still shorter compared to the baseline method for up to at least 100 systems as shown in Fig.~\ref{fig:runtime_comparison}. Overall, the results demonstrate that the proposed method achieves reliable performance with strong theoretical guarantees. Combined with the data-driven formulation, which accommodates system heterogeneity without explicit modelling, the clustering framework enables scalable, parallel control scenarios where controlling many systems individually would otherwise be computationally prohibitive.

\section{Conclusion}

This paper investigated the problem of clustering dynamical systems from a control perspective, with the aim of enabling controller reuse across similar systems. A data-driven framework based on restricted graphs was employed, together with a sufficient, data-driven condition for well-posedness of the clustered closed-loop systems proposed. The framework was evaluated with data-driven LQR control. The results demonstrated that a large number of systems, in this simplified setting, can efficiently be clustered into a small number of groups while incurring only a small degradation in closed-loop performance. In this preliminary case study, all system states were assumed to be directly measurable and the measurements were noise-free. In practice, data-driven LQR could be combined with a shared, data-driven observer design, which would introduce additional considerations in both clustering and performance guarantees.

The method reduces online computational complexity by clustering $N$ systems into $K\ll N$ groups, such that controller design and implementation can be performed at the cluster level. The computational savings are limited by the overhead of the clustering procedure as pair-wise calculation scales poorly with number of systems . However, for more demanding control algorithms, such as data-based predictive control, the benefits of clustering are expected to significantly outweigh this overhead. This is particularly relevant for nonlinear systems, where input--output data may need to be updated as the operating point changes, requiring repeated controller synthesis.

This observation raises several directions for future work. First, when input--output data is updated online, the corresponding $L$-gap between systems may become time-varying, suggesting the need for adaptive gap estimation and clustering strategies. Second, the current framework relies on computing all pairwise gaps, which scales quadratically with the number of systems. Updating the input--output data also requires re-factorising the Hankel matrices to extract a subspace basis. Exploiting structural properties such as the triangle inequality of the gap may enable more efficient clustering schemes, for example by reducing the number of required gap evaluations or enabling hierarchical clustering strategies.

\newpage

\bibliographystyle{IEEEtran}
\small
\bibliography{mybib_clean_abbrev}

\balance

\newpage

\appendix

\renewcommand{\thetable}{A.\arabic{table}}
\setcounter{table}{0}

\renewcommand{\thelemma}{A.\arabic{lemma}}
\setcounter{lemma}{0}

\renewcommand{\theremark}{A.\arabic{remark}}
\setcounter{remark}{0}

\begin{lemma}[Orthogonal Complement]\label{app:thm:orthogonal_complement}
For a bounded linear operator $C : H^2 \to H^2$, define its graph as $\mathcal G(C) = \left\{ \mathrm{col}(u, Cu) : u \in H^2 \right\}$. Then, the orthogonal complement of the graph is given by $\mathcal G(C)^\perp = \left\{ \mathrm{col}(-C^* v, v) : v \in H^2 \right\}$.
\end{lemma}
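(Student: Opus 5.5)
We prove the two inclusions separately. Write $\mathcal K\eqdef\{\mathrm{col}(-C^*v,v):v\in H^2\}$, and use the inner product on $H^2\oplus H^2$ given by $\langle \mathrm{col}(a,b),\mathrm{col}(c,d)\rangle=\langle a,c\rangle+\langle b,d\rangle$. Since $C$ is bounded, its adjoint $C^*$ exists, is bounded, and satisfies $\langle Cu,v\rangle=\langle u,C^*v\rangle$ for all $u,v\in H^2$.

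\emph{First inclusion, $\mathcal K\subseteq\mathcal G(C)^\perp$.} Take $v\in H^2$ and an arbitrary graph element $\mathrm{col}(u,Cu)$. Then
\[
\langle \mathrm{col}(u,Cu),\mathrm{col}(-C^*v,v)\rangle=-\langle u,C^*v\rangle+\langle Cu,v\rangle=0,
\]
by the defining property of the adjoint. Hence every element of $\mathcal K$ is orthogonal to $\mathcal G(C)$.

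\emph{Reverse inclusion, $\mathcal G(C)^\perp\subseteq\mathcal K$.} Let $\mathrm{col}(a,b)\in\mathcal G(C)^\perp$. Orthogonality to every graph element gives, for all $u$,
\[
0=\langle u,a\rangle+\langle Cu,b\rangle=\langle u,a+C^*b\rangle .
\]
Because $u$ ranges over all of $H^2$, this forces $a+C^*b=0$, i.e.\ $a=-C^*b$. Taking $v=b$ shows $\mathrm{col}(a,b)=\mathrm{col}(-C^*v,v)\in\mathcal K$.

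There is no real obstacle; the only points that need care are the following.
\begin{itemize}
\item The argument uses that $C$ is bounded and everywhere defined, so that $C^*$ is defined on all of $H^2$.
\item It uses that the direct-sum inner product splits componentwise, as written above.
\end{itemize}

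For the finite-horizon use in Thm.~\ref{thm:gaptheorem}, the same computation applies verbatim with $C_L$ a matrix and $C^*=C_L^\Tr$, which is how the result is used there. One subtlety is the variable ordering. The paper's inverse graph $\mathcal G_L^{-1}(-C)=\{\mathrm{col}(-C_Ly,y)\}$ has the form of $\mathcal K$ with $C^*$ replaced by $C$. Applying the lemma to the operator $C^*$, whose adjoint is $C$, therefore yields $\mathcal G_L(C^*)^\perp=\mathcal G_L^{-1}(-C)$. Taking orthogonal complements of both sides, which is valid since finite-dimensional subspaces are closed, gives the identity invoked in the proof of Thm.~\ref{thm:gaptheorem}. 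I would state this remark explicitly so that the dimension bookkeeping there is consistent.
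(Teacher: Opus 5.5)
Your proof is correct and uses the same core computation as the paper, namely that $\mathrm{col}(u,Cu)$ and $\mathrm{col}(-C^*v,v)$ are orthogonal by the adjoint identity. The paper only proves that inclusion and cites Feintuch, Prop.~1.2.5, for equality; your reverse-inclusion step ($\langle u, a + C^*b\rangle = 0$ for all $u$ forces $a=-C^*b$) makes the argument self-contained, and your remark that applying the lemma to $C^*$ gives $\mathcal G_L(C^*)^\perp=\mathcal G_L^{-1}(-C)$ correctly matches its use in Thm.~\ref{thm:gaptheorem}.
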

\begin{proof}
For $H^2 \oplus H^2$ with the standard inner product
\[
\left\langle 
\mathrm{col}(f_1, g_1), 
\mathrm{col}(f_2, g_2)
\right\rangle
= \langle f_1, f_2 \rangle_{H^2} + \langle g_1, g_2 \rangle_{H^2}.
\]
Then, for any $u, v \in H^2$,
\[
\begin{aligned}
\left\langle 
\mathrm{col}(u, Cu), 
\mathrm{col}(-C^* v, v)
\right\rangle
&= \langle u, -C^* v \rangle_{H^2} + \langle Cu, v \rangle_{H^2} \\
&= -\langle Cu, v \rangle_{H^2} + \langle Cu, v \rangle_{H^2} \\
&= 0.
\end{aligned}
\]
It follows that $\mathcal G(C)^\perp
= \left\{ \mathrm{col}(-C^* v, v) : v \in H^2 \right\}$~\cite[Prop. 1.2.5]{Feintuch1998}.
\end{proof}

\begin{lemma}[Gap and subspace intersection]
\label{app:thm:gapintersection}
Let $M,N\subseteq\mathcal H$ be finite-dimensional subspaces with
$\dim(M)=\dim(N)$. Then
\[
\mathrm{gap}(M,N)<1
\,\,\Longleftrightarrow\,\,
M^\perp\cap N=\{0\}
\,\,\Longleftrightarrow\,\,
M\cap N^\perp=\{0\},
\]
where $\mathrm{gap}(M,N)\eqdef\max\{\|P_{N^\perp}P_M\|,      \|P_{M^\perp}P_N\|\}$.
\end{lemma}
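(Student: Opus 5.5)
We want to prove Lemma~\ref{app:thm:gapintersection}: for finite-dimensional subspaces $M,N\subseteq\mathcal H$ with $\dim M=\dim N$, $\mathrm{gap}(M,N)<1$ holds iff $M^\perp\cap N=\{0\}$, iff $M\cap N^\perp=\{0\}$.

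\textbf{Step 1: translate the operator norms into intersection conditions.} Since $M$ is finite-dimensional, the supremum defining $\|P_{N^\perp}P_M\|$ is attained on the unit sphere of $M$. For a unit vector $x\in M$,
\[
\|P_{N^\perp}x\|^2 = 1-\|P_Nx\|^2 .
\]
Hence $\|P_{N^\perp}P_M\|=1$ exactly when some unit $x\in M$ has $P_Nx=0$, i.e.\ $x\in M\cap N^\perp$. Since $\|P_{N^\perp}P_M\|\le 1$ always, this gives
\[
\|P_{N^\perp}P_M\|<1 \iff M\cap N^\perp=\{0\}.
\]
Exchanging the roles of $M$ and $N$ gives
\[
\|P_{M^\perp}P_N\|<1 \iff N\cap M^\perp=\{0\}.
\]
Consequently, $\mathrm{gap}(M,N)<1$ iff both intersections are trivial. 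The remaining task is to show that, under $\dim M=\dim N$, either intersection being trivial forces the other to be trivial.

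\textbf{Step 2: show the two intersection conditions are equivalent (the main step).} Suppose $M\cap N^\perp=\{0\}$. Then the restricted map $P_N|_M:M\to N$ is injective, since its kernel is exactly $M\cap N^\perp$. Because $\dim M=\dim N<\infty$, rank--nullity makes it bijective. Its adjoint is $(P_N|_M)^*=P_M|_N:N\to M$: indeed, for $x\in M$ and $y\in N$,
\[
\langle P_Nx,y\rangle=\langle x,y\rangle=\langle x,P_My\rangle .
\]
The adjoint of a bijection between finite-dimensional spaces is again a bijection, so $P_M|_N$ is injective. Its kernel is $N\cap M^\perp$, which is therefore $\{0\}$. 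The converse follows by symmetry.

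Alternatively, one can use the stated identity $\|P_{N^\perp}P_M\|=\|P_{M^\perp}P_N\|$ for equal dimensions, which appears in Definition~\ref{def:gap}. Combined with Step~1, it immediately shows that both norms are $<1$ simultaneously. I would still give the adjoint argument, so the proof does not rely on an unproved claim.

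\textbf{Step 3: assemble the chain.} By Step~1, $\mathrm{gap}(M,N)<1$ is equivalent to both intersections being trivial. By Step~2, both intersections are trivial iff one of them is. This yields
\[
\mathrm{gap}(M,N)<1 \iff M^\perp\cap N=\{0\} \iff M\cap N^\perp=\{0\}.
\]

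The only delicate point is Step~2, because the equal-dimension hypothesis is genuinely needed there. For example, a line inside a plane has trivial intersection with the plane's orthogonal complement, but the plane meets the line's orthogonal complement nontrivially.
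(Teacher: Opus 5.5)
Your proof is correct and follows essentially the same route as the paper: you show each norm equals $1$ exactly when the corresponding intersection is nontrivial (using attainment on the finite-dimensional unit sphere), then use $\dim M=\dim N$ to turn injectivity of one restricted projection into bijectivity, so that its adjoint, the other restricted projection, is also injective and the kernels $M\cap N^\perp$ and $N\cap M^\perp$ vanish simultaneously. Your explicit check that $\langle P_Nx,y\rangle=\langle x,P_My\rangle$ for $x\in M$, $y\in N$ supplies a step the paper only asserts.
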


\begin{proof}
Since orthogonal projectors are contractions,
$\|P_{M^\perp}P_N\|\leq1$. In finite dimensions, the norm is attained, and $\|P_{M^\perp}P_N\|=1$ iff $M^\perp\cap N\neq\{0\}$.
Indeed, if a unit vector $x\in M^\perp\cap N$ exists, then
$P_{M^\perp}P_Nx=x$. Conversely, if the norm is one, there exists a
unit vector $x\in N$ such that $\|P_{M^\perp}x\|=1$, which implies
$x\in M^\perp$.

Similarly, $\|P_{N^\perp}P_M\|=1$ iff $M\cap N^\perp\neq\{0\}$. Finally, since $\dim(M)=\dim(N)$, the restricted projection
$P_{M|N}:N\rightarrow M$ is injective iff it is bijective.
Its kernel is $N\cap M^\perp$, while the kernel of its adjoint
$P_{N|M}:M\rightarrow N$ is $M\cap N^\perp$. Hence these two
intersections are trivial simultaneously. The result follows from the
definition of the gap.
\end{proof}

\begin{lemma}[Restricted Projection]\label{app:thm:restrictedprojector}
Let $M$ and $N$ be finite-dimensional subspaces with $\dim(N)\!=\!\dim(M)\!=\! n$, and assume $\delta\!=\! \mathrm{gap}(M,N)\!<\!1$. Let $P_{M|N}: N\!\mapsto\! M$ be the restricted projector that projects $n\!\in\! N$ onto $M$. Then, $P_{M|N}$ is bijective, $\|P_{M|N} n\|\geq \sqrt{1-\delta^2} \|n\|\,\forall n\!\in\! N$, and $\|P_{M|N}^{-1}\|\leq \frac{1}{\sqrt{1-\delta^2}}$.
\end{lemma}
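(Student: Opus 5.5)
The plan is to reduce everything to the single norm identity $\|P_M n\|^2=\|n\|^2-\|P_{M^\perp}n\|^2$ for $n\in N$. This identity is Pythagoras applied to the decomposition $n=P_Mn+P_{M^\perp}n$. The lower bound then follows from the definition of the gap, and bijectivity and the inverse bound follow from the lower bound and equality of dimensions.

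\textbf{Step 1 (lower bound).} Fix $n\in N$. Since $n=P_Nn$, we have
\[
\|P_{M^\perp}n\|=\|P_{M^\perp}P_Nn\|\leq\|P_{M^\perp}P_N\|\,\|n\|\leq\delta\|n\|,
\]
using that $\|P_{M^\perp}P_N\|\leq\mathrm{gap}(M,N)=\delta$ by the definition in Lemma~\ref{app:thm:gapintersection}. Substituting into the Pythagorean identity gives
\[
\|P_{M|N}n\|^2=\|n\|^2-\|P_{M^\perp}n\|^2\geq(1-\delta^2)\|n\|^2,
\]
which is the claimed inequality $\|P_{M|N}n\|\geq\sqrt{1-\delta^2}\|n\|$.

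\textbf{Step 2 (bijectivity).} Because $\delta<1$, the constant $\sqrt{1-\delta^2}$ is strictly positive. The bound from Step 1 therefore forces $\ker(P_{M|N})=\{0\}$, so $P_{M|N}$ is injective. It is a linear map between finite-dimensional spaces of the same dimension $n$, so rank--nullity shows it is also surjective, hence bijective. Alternatively, one can note that $\ker P_{M|N}=N\cap M^\perp$ and invoke Lemma~\ref{app:thm:gapintersection} directly.

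\textbf{Step 3 (inverse bound).} Take any $m\in M$ and set $n=P_{M|N}^{-1}m$. Applying Step 1 to this $n$ gives $\|m\|\geq\sqrt{1-\delta^2}\|P_{M|N}^{-1}m\|$. Taking the supremum over unit vectors $m$ yields $\|P_{M|N}^{-1}\|\leq1/\sqrt{1-\delta^2}$.

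There is no real obstacle. The only point needing care is that the gap is defined as a maximum of two one-sided quantities, and only the side $\|P_{M^\perp}P_N\|$ is used here. This side is bounded by $\delta$ whatever the dimensions, so the equal-dimension hypothesis is needed only for surjectivity in Step 2.
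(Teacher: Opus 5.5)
Your proof is correct. It differs from the paper's mainly in ordering and in how explicit the key step is.

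The paper argues in two separate pieces. It gets injectivity qualitatively from Lemma~\ref{app:thm:gapintersection}: $\delta<1$ forces $N\cap M^\perp=\{0\}$, and bijectivity then comes from equal dimensions. It then writes $\|P_{M|N}n\|=\|n-P_{M^\perp}P_Nn\|$ and says only ``lower bound using $\delta$''. You instead prove the quantitative bound first and get injectivity as a by-product, so Lemma~\ref{app:thm:gapintersection} is not needed at all. You also correctly note that only the one-sided quantity $\|P_{M^\perp}P_N\|$ enters, with equal dimensions used solely for surjectivity.

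More importantly, your explicit Pythagorean step $\|P_Mn\|^2=\|n\|^2-\|P_{M^\perp}n\|^2$ is exactly what produces the constant $\sqrt{1-\delta^2}$. If one applied the triangle inequality to the paper's displayed expression, one would get only $\|P_{M|N}n\|\geq(1-\delta)\|n\|$. That is strictly weaker for $\delta\in(0,1)$, since $(1-\delta)^2\leq1-\delta^2$. Your version is therefore more self-contained and also supplies the step the paper leaves implicit.
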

\begin{proof}
To show injectivity, note that from Thm.~\ref{thm:gaptheorem}, $\mathrm{gap}(M,N)\!<\!1 \Leftrightarrow N\cap M^\perp = N^\perp\cap M = \lbrace 0 \rbrace$. Take $x\in N$ and suppose that $P_{M|N}x=P_M x = 0$. This implies that $x\in M^\perp$ and therefore $x\in M^\perp\cap N = \lbrace 0 \rbrace$. So $P_{M|N}x=0\Leftrightarrow x=0$. Bijectivity follows from $P_{M|N}$ being injective and linear. 

To obtain the inequalites, use 
\begin{align*}
\|P_{M|N} n\| = \|P_M n\| = \|n - P_{M^\perp} n\| = \|n - P_{M^\perp} P_N n\|,
\end{align*}
and lower bound using $\delta =\|P_{M^\perp} P_N\|$.
\end{proof}

\begin{remark}
A finite-horizon data representation of the subspace formed by $\mathcal G(C)^\perp$ can be constructed by truncating signals to length $L$. Given a signal $v$, define its sliding windows
\[
v_{[k,k+L-1]} := \begin{bmatrix} v_k & v_{k+1} & \cdots & v_{k+L-1} \end{bmatrix}.
\]
Then the corresponding Hankel-type matrix for $\mathcal G(C)^\perp$ is
\[
H_{L, C^\perp}
=
\begin{bmatrix}
-\,C^* v_{[0,L-1]} & -\,C^* v_{[1,L]} & \cdots & -\,C^* v_{[T-L,T-1]} \\
\;\;v_{[0,L-1]}    & \;\;v_{[1,L]}    & \cdots & \;\;v_{[T-L,T-1]}
\end{bmatrix}.
\]
\end{remark}

\end{document}